\newtheorem{defn}{Definition}[section]
\newtheorem{theorem}{Theorem}[section]
\newtheorem{lemma}[theorem]{Lemma}
\newtheorem{claim}[theorem]{Claim}
\newcommand{\sq}{\hbox{\rlap{$\sqcap$}$\sqcup$}}
\newcommand{\qed}{\hspace*{\fill}\sq}
\newenvironment{proof}{\noindent {\bf Proof.}\ }{\qed\par\vskip 4mm\par}
\newcommand{\bigO}{\mathcal{O}}
\title{Short sequences of improvement moves lead to approximate equilibria in constraint satisfaction games}
\author{Ioannis Caragiannis\thanks{Computer Technology Institute and Press ``Diophantus'' \& Department of Computer Engineering and Informatics, University of Patras, 26500 Rio, Greece. Email: {\tt caragian@ceid.upatras.gr}. The work is partially supported by the European Social Fund and Greek national funds through the research funding program Thales on ``Algorithmic Game Theory''.} \and Angelo Fanelli\thanks{CNRS, Caen, France. Email: {\tt angelo.fanelli@gmail.com}} \and Nick Gravin\thanks{Microsoft Research New England, Cambridge, MA, USA. Email: {\tt ngravin@microsoft.com}}}
\date{}
\begin{document}

\maketitle

\begin{abstract}
We present an algorithm that computes approximate pure Nash equilibria in a broad class of constraint satisfaction games that generalize the well-known cut and party affiliation games. Our results improve previous ones by Bhalgat et al.~(EC 10) in terms of the obtained approximation guarantee. More importantly, our algorithm identifies a polynomially-long sequence of improvement moves from any initial state to an approximate equilibrium in these games. The existence of such short sequences is an interesting structural property which, to the best of our knowledge, was not known before. Our techniques adapt and extend our previous work for congestion games (FOCS 11) but the current analysis is considerably simpler.

\medskip\noindent{\bf Keywords:} algorithmic game theory, complexity of equilibria, pure Nash equilibrium, potential games, constraint satisfaction
\end{abstract}

\section{Introduction}
Constraint satisfaction games are generalizations of the well-known cut games and party affiliation games. In a constraint satisfaction game, there is a set of boolean variables and a set of weighted constraints; each constraint depends on some of these variables. Each player controls the value of a distinct variable and has two possible strategies: setting the value of the variable to either $0$ (false) or $1$ (true). The payoff (or utility) of a player is the total weight in satisfied constraints where her variable appears. Constraint satisfaction games are potential games. The total weight of satisfied constraints serves as an exact potential function in the sense that the difference in the potential between two states that differ in the strategy of a single player equals the change in the utility of that player. Hence, pure Nash equilibria (i.e., states in which no player has an incentive to unilaterally move in order to improve her utility) can be computed by solving the local search problem (see \cite{MAK07} for a theoretical treatment of local search) of computing a local maximum of the potential function. Unfortunately, this is a computationally-hard problem \cite{SY91}. In this paper, we resort to the question of whether relaxed solution concepts --- namely, approximate (pure Nash) equilibria --- can be computed efficiently.

In particular, we consider constraint satisfaction games where each constraint depends on the value of at most $k$ variables and has the property that its value can change from false to true by a unilateral change in any of its variables. In general, we refer to such games as $P_k$--{\sc Flip} games following the terminology of Bhalgat et al.~\cite{BhalgatCK10}. Particular examples of this type of constraints include ``parity'' and ``not--all--equal'' constraints. An odd (respectively, even) parity constraint requires that the number of its true variables is odd (respectively, even). A not-all-equal constraint consists of literals (i.e., variables or their negations) and requires that at least two of its literals have different values. We refer to $P_k$--{\sc Flip} games consisting of parity constraints as {\sc Parity}--$k$--{\sc Flip} games; $P_k$--{\sc Flip} games with not--all--equal constraints with at least $\bar{k}$ literals are called {\sc nae}--$(\bar{k},k)$--{\sc Flip} games. Party affiliation games are {\sc Parity}--$2$--{\sc Flip} games and, in particular, cut games are {\sc Parity}--$2$--{\sc Flip} games with odd constraints or {\sc nae}--$(2,2)$--{\sc Flip} games whose constraints have no negative literals.

By adapting and extending our techniques in \cite{CFGS11} for congestion games, we present a polynomial-time algorithm that computes approximate equilibria in $P_k$--{\sc Flip} games. The approximation guarantee is related to the stretch $\theta$ of the potential function of games in a given class, defined as the maximum over all games in the class of the maximum ratio between the potential values in two equilibria. As we show, $P_k$--{\sc Flip} games have a stretch of $k+1$; hence, for general $P_k$--{\sc Flip} games, the approximation guarantee $\theta+\varepsilon$ of our algorithm improves a previous one of $2k-1+\varepsilon$ by Bhalgat et al.~\cite{BhalgatCK10} for $k\geq 3$. By bounding the stretch of {\sc nae}--$(\bar{k},k)$--{\sc Flip} and {\sc Parity}--$k$--{\sc Flip} games, we are able to show further improvements. For {\sc nae}--$(\bar{k},k)$--{\sc Flip} games, the approximation guarantee becomes $3+\varepsilon$ for $\bar{k}=2$ and $2+\varepsilon$ for $\bar{k}\geq 3$; these results improve a bound of $\frac{2\bar{k}}{\bar{k}-1}+\varepsilon$ from \cite{BhalgatCK10}. For {\sc Parity}--$k$--{\sc Flip} games with odd $k$, the approximation guarantee is $k+\varepsilon$. The running time of the algorithm is bounded by a polynomial of the number of players, $k$, and $1/\varepsilon$. Our analysis follows the same general structure of \cite{CFGS11} but uses different technical arguments and is considerably simpler due to the simplicity in the definition of $P_k$--{\sc Flip} games.

More importantly, for every initial state of the game, our algorithm identifies a polynomially-long sequence of improvement moves of the players that lead to an approximate equilibrium. The existence of such short sequence suggests an interesting structural property of $P_k$--{\sc Flip} games which, to the best of our knowledge, was not known before. Actually, Bhalgat et al.~\cite{BhalgatCK10} argue about the limitations of (uncoordinated) improvement move sequences by presenting a particular cut game in which any sequence of $\rho$-moves (i.e., moves that improve the utility of the moving player by a factor of at least $\rho$) from some states to any $\rho$--approximate equilibrium has exponential length for any $\rho\in [1,21/20)$. This negative result complements nicely with the structural property we prove.

Our algorithm is simple. Players are classified into blocks so that the players within the same block have polynomially-related maximum utility (i.e., total weight of the constraints a player can affect). Then, a set of phases is executed. In each phase the players in two consecutive blocks are allowed to move. The players in the block of higher maximum utility are allowed to make $p$-moves and the players of the other block are allowed to make $q$-moves. Then, the strategies of the players that were allowed to perform $p$-moves within a phase are irrevocably decided at its end. Clearly, this defines a sequence of improvement moves by the players. We show that by setting the parameters $q$ and $p$ appropriately, the algorithm terminates in polynomial time and, furthermore, the players whose strategies are irrevocably decided at the end of a phase will not be affected significantly by later moves. In order to do so, we select a value for parameter $p$ that is slightly higher than the stretch of the class of games to which the input game belongs and a value for parameter $q$ that is very close to $1$.

%\medskip\noindent{\bf 
\paragraph{Related work.} Sch\"affer and Yannakakis \cite{SY91} proved that the problem of computing a pure Nash equilibrium in constraint satisfaction games is complete for the class PLS --- standing for polynomial local search --- that has been introduced by Johnson et al.~\cite{Johnson88}. The negative result of \cite{SY91} covers all games considered in the current work and have been strengthened in \cite{Klauck96,Krentel89} to capture instances in which each player participates in a constant number of constraints. Among the few rare non-trivial positive results is an algorithm by Poljac \cite{Poljac95} who shows that a local maximum of the potential function in cut games can be computed in polynomial time when each player participates in at most three constraints.

The algorithm of \cite{BhalgatCK10} for approximate equilibria in $P_k$--{\sc Flip} games has the following structure. Players are partitioned into layers in a similar way to the block partitioning that we use in the current paper. Then, a rearrangement phase moves players across blocks in order to guarantee that the total weight of constraints, in which a player $i$ participates together only with players in the same block or ones having lower maximum utility, is at least $1/k$ of player $i$'s maximum utility. This can be done in such a way that, eventually, each layer contains players with polynomially-related maximum utility. Then, a top-down layer dynamics phase takes place, where players within each layer play $(1+\varepsilon/k)$-moves in a restricted game among them until they reach an $(1+\varepsilon/k)$--approximate equilibrium in this restricted game. The authors of \cite{BhalgatCK10} show that the state computed in this way is a $(2k-1+\varepsilon)$--approximate equilibrium for the original game. They also present a variation of their algorithm for {\sc nae}--$(\bar{k},k)$--{\sc Flip} games that computes $(\frac{2\bar{k}}{\bar{k}-1}+\varepsilon)$--approximate equilibria. As the authors of \cite{BhalgatCK10} emphasize, in general, the moves during the top-down layer dynamics phase are not improvement moves in the original game. In contrast, our algorithm consists only of improvement moves.

Another class of potential games where the problem of computing an (approximate) equilibrium has received a lot of attention is that of congestion games. A classical potential function for these games has been defined by Rosenthal \cite{R73}. Fabrikant et al. \cite{FabrikantPT04} prove that computing a local minimum of this function (corresponding to a pure Nash equilibrium) is PLS-hard as well. Even worse, for sufficiently general congestion games, Skopalik and Voecking \cite{SkopalikV08} show that computing a $\rho$--approximate equilibrium is PLS-hard for every reasonable (i.e., polynomially-computable) value of $\rho$. In our previous work, we have presented an algorithm to compute $O(1)$-approximate equilibria for congestion games under mild assumptions for the structure of the game. The current paper adapts and extends the main algorithmic techniques in that paper, which have also been applied to (non-potential) weighted variants of congestion games in \cite{CFGS12}. Exact or almost exact equilibria can be computed in several special cases (e.g., see \cite{CS11,FabrikantPT04}).

We remark that, even though it is hard to compute exactly, a local optimum of a potential function can be approximated with extremely low precision under very mild assumptions \cite{OPS04}. This does not imply that equilibria can be approximated with a similar precision, as the negative results of \cite{SkopalikV08} show. Also, uncoordinated move sequences have been shown to reach states of high social value quickly \cite{AwerbuchAEMS08,BBM13,CMS06}, i.e., to states with low potential in the case of $P_k$--{\sc Flip} games. Unfortunately, these states are not approximate equilibria either, since some player typically has a high incentive to move.

%\medskip\noindent{\bf 
\paragraph{Roadmap.} The rest of the paper is structured as follows. We begin with preliminary definitions in Section~\ref{sec:prelim}. Section \ref{sec:stretch} is devoted to our upper bounds on the stretch of $P_k$--{\sc Flip} games. The algorithm and the statement of our main result are presented in Section~\ref{sec:alg} and the analysis follows in Section~\ref{sec:proof}. We conclude with open problems in Section \ref{sec:open}.

\section{Preliminaries}\label{sec:prelim}
A constraint satisfaction game consists of a set $N$ of $n$ players, a set of at least $n$ boolean variables $V=\{s_1, s_2, ..., s_{|V|}\}$, and a set $C$ of constraints (henceforth called clauses) over the variables in $V$. Each clause $c\in C$ has a non-negative weight $w_c$. Player $j\in N$ controls the value of a distinct variable $s_j$ from $V$ and has two possible strategies: setting the value of $s_j$ to either $0$ (false), or $1$ (true). The variables of $V$ that are not controlled by any player (if any) are frozen to certain boolean values. A state $S$ of the game is simply a snapshot of variable values (or a snapshot of players strategies complemented with the fixed values of the frozen variables), i.e., $S=(s_1, s_2, ..., s_{|V|})$. Given a state $S$ of the game, we denote by $SAT(S)$ the set of satisfied clauses. For a subset of players $R\subseteq N$, we denote by $SAT_R(S)$ the subset of $SAT(S)$ that consists of clauses in which the variable of some player from $R$ appears. With some abuse of notation, we simplify $SAT_{\{j\}}(S)$ to $SAT_j(S)$. The utility of a player $j$ is the total weight of the true clauses in which her variable appears, i.e., $u_j(S) = \sum_{c\in SAT_j(S)}{w_c}$. We also denote by $C_R$ the set of clauses in which at least one player of $R$ participates and simplify $C_{\{j\}}$ to $C_j$. We use $U_j$ to denote the maximum possible utility that player $j$ might have, i.e., $U_j=\sum_{c\in C_j}{w_c}$.

Given a state $S=(s_1, s_2, ..., s_{|V|})$ and a player $j$, we denote by $(S_{-j},s'_j)$ the state obtained from $S$ when player $j$ unilaterally changes her strategy. This is an improvement move (or simply, a move) for player $j$ if her utility increases, i.e., $u_j(S_{-j},s'_j) > u_j(S)$. We call it a $\rho$--move when the utility increases by more than a factor of $\rho$, i.e., $u_j(S_{-j},s'_j) > \rho \cdot u_j(S)$. A state $S$ is a pure Nash equilibrium (or simply, an equilibrium) if no player has a move to make. Similarly, $S$ is a $\rho$--approximate (pure Nash) equilibrium if no player has a $\rho$-move.

We specifically consider clauses with the following property: any false clause can become true by changing the value of any of its variables. We will refer to games with clauses satisfying this property and with at most $k$ variables per clause as $P_k$--{\sc Flip} games. This class is broad enough and contains (generalizations of) several well-studied games such as cut games and party affiliation games. We are particularly interested in two subclasses of $P_k$--{\sc Flip} games. A {\sc nae}-clause contains literals (i.e., variables or their negations) and equals $1$ if and only if there are two literals with different values. We will refer to games consisting of {\sc nae}-clauses with at least $\bar{k}\geq 2$ and most $k$ literals as {\sc nae}--$(\bar{k},k)$--{\sc Flip} games. Observe that these games are $P_k$--{\sc Flip} games since changing the value of any variable that appears in a clause can change the value of the clause from $0$ to $1$. In {\sc Parity}--$k$--{\sc Flip} games, each clause is characterized as odd or even; an odd (respectively, even) clause is true if the number of its variables which are $1$ is odd (respectively, even). An important property of $P_k$--{\sc Flip} games is that for any state $S$ and any player $j$, it holds that $U_j\leq u_j(S)+u_j(S_{-j},s'_j)$.

Given a state $S$ of a $P_k$--{\sc Flip} game, we denote by $\Phi(S)$ the total weight of all true clauses, i.e., $\Phi(S)=\sum_{c\in SAT_N(S)}{w_c}$. The function $\Phi$ is a potential function for this game. In particular, it has the remarkable property that for every two states $S$ and $(S_{-j},s'_j)$ differing only in the strategy of player $j$, the difference of the potential is equal to the difference of the utility of player $j$, i.e., $\Phi(S)-\Phi(S_{-j},s'_j) = u_j(S)-u_j(S_{-j},s'_j)$.

In the following, we will be often considering sequences of moves in which only players in a certain subset $R\subseteq N$ are allowed to move. We can view such moves as moves in a subgame among the players in $R$, with the set of clauses $C_R$ (each clause in $C_R$ has the same weight as in the original game), and with fixed values for the variables that are not controlled by players in $R$. Observe that any subgame of a $P_k$--{\sc Flip} game is a $P_k$--{\sc Flip} game as well. Similarly, any subgame of a {\sc nae}--$(\bar{k},k)$--{\sc Flip} (respectively, {\sc Parity}--$k$--{\sc Flip}) game is a {\sc nae}--$(\bar{k},k)$--{\sc Flip} (respectively, {\sc Parity}--$k$--{\sc Flip}) game as well. The function $\Phi_R(S) = \sum_{c\in SAT_R(S)}{w_c}$ is an exact potential function for the subgame among the players in $R$. The next claim follows easily by the definitions.

\begin{claim}\label{cl:easy}
For every state $S$ of a $P_k$--{\sc Flip} game and any set of players $R\subseteq N$, it holds that $\Phi_R(S) \leq \sum_{j\in R}{u_j(S)} \leq k\Phi_R(S)$. Furthermore, for every set of players $R'\subseteq R$, it holds that $\Phi_{R'}(S) \leq \Phi_{R}(S)$. 
\end{claim}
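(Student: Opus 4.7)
The plan is to prove both parts by a simple double-counting argument; there is no deeper structure to exploit. First I would expand
\[
\sum_{j\in R} u_j(S) \;=\; \sum_{j\in R}\sum_{c\in SAT_j(S)} w_c
\]
and swap the order of summation, grouping by clause. A clause $c$ contributes to this double sum precisely when it is satisfied and at least one variable appearing in $c$ is controlled by some player in $R$, which is exactly the definition of $c\in SAT_R(S)$. Hence the sum rewrites as $\sum_{c\in SAT_R(S)} w_c\cdot m_R(c)$, where $m_R(c)$ denotes the number of players in $R$ whose variable appears in $c$.

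Next I would bound the multiplicity $m_R(c)$ on both sides. Since membership of $c$ in $SAT_R(S)$ already guarantees $m_R(c)\geq 1$, the lower bound $\Phi_R(S)\leq \sum_{j\in R} u_j(S)$ follows immediately. For the upper bound, I would invoke the defining property of a $P_k$--{\sc Flip} game: every clause depends on at most $k$ variables, so $m_R(c)\leq k$, yielding $\sum_{j\in R} u_j(S)\leq k\Phi_R(S)$. Both inequalities thus reduce to the trivial bounds $1\leq m_R(c)\leq k$.

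For the final monotonicity statement, I would simply observe that $R'\subseteq R$ implies $SAT_{R'}(S)\subseteq SAT_R(S)$, since any clause containing a variable controlled by a player in $R'$ also contains one controlled by a player in $R$. Summing the nonnegative weights $w_c$ over the smaller index set then gives $\Phi_{R'}(S)\leq \Phi_R(S)$.

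There is no substantive obstacle; the claim is essentially an unpacking of definitions. The only care required is keeping the two sides of $1\leq m_R(c)\leq k$ straight and noting precisely where the $P_k$--{\sc Flip} hypothesis enters — namely, only through the upper bound on the number of variables per clause, which produces the factor $k$ on the right.
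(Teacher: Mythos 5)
Your proof is correct and follows essentially the same double-counting argument as the paper: each satisfied clause in $SAT_R(S)$ contributes between $1$ and $k$ times to $\sum_{j\in R} u_j(S)$, and the monotonicity part follows from $SAT_{R'}(S)\subseteq SAT_R(S)$. You have simply made the multiplicity $m_R(c)$ explicit, which the paper leaves implicit.
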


\begin{proof}
The first inequality follows since every clause that contributes to the sum $\sum_{c\in SAT_R(S)}{w_c}$ (which is equal to $\Phi_R(S)$) contributes at least once and at most $k$ times to the sum $\sum_{j\in R}\sum_{c\in SAT_j(S)}{w_c}$ (which is equal to $\sum_{j\in R}{u_j(S)}$). The second one follows trivially since $SAT_{R'}(S)\subseteq SAT_R(S)$.
\end{proof}

\section{The stretch of $P_k$--{\sc Flip} games}\label{sec:stretch}
The approximation guarantee of our algorithm depends on a quantity related to the potential function of $P_k$--{\sc Flip} games that we call the stretch.

\begin{defn}
Given $\eta\geq 0$, the $(1+\eta)$-stretch of a $P_k$--{\sc Flip} game is the ratio between the maximum and the minimum value of the potential function taken over all $(1+\eta)$-approximate pure Nash equilibria of the game.
\end{defn}

We use the term stretch as a synonym of $1$-stretch; observe that it is simply the ratio between the maximum and minimum potentials of (exact) equilibria. In Theorem \ref{thm:stretch}, we present upper bounds on the $(1+\eta)$--stretch of $P_k$--{\sc Flip} games. Note that these bounds may be of independent interest; bounds on the stretch of congestion games from our previous work \cite{CFGS11} have been used by Piliouras et al.~\cite{PNS13} in order to quantify the price of anarchy of congestion games in settings with uncertainty where players have particular risk attitudes.
\begin{theorem}\label{thm:stretch}
For any $\eta>0$, the $(1+\eta)$--stretch of $P_k$--{\sc Flip} games, {\sc nae}--$(3,k)$--{\sc Flip} games, {\sc nae}--$(2,k)$--{\sc Flip} games, and {\sc Parity}--$k$--{\sc Flip} games with odd $k$ is at most $k+1+k\eta$, $2+k\eta$, $3+k\eta$, and $k+k\eta$, respectively.
\end{theorem}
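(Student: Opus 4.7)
\textbf{Proof proposal for Theorem~\ref{thm:stretch}.}

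The plan is to fix two $(1+\eta)$-approximate equilibria $S$ and $S^*$ with $\Phi(S^*) \geq \Phi(S)$ and to bound the ratio $\Phi(S^*)/\Phi(S)$. A convenient reduction is the trivial bound $\Phi(S^*) \leq U := \sum_{c \in C} w_c$, so it suffices to show $U \leq B\,\Phi(S)$ for the claimed $B$; equivalently, to control the total weight of clauses left unsatisfied in $S$.

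The central inequality exploits the key property of $P_k$--{\sc Flip} games stated in the preliminaries: for every player $j$, $U_j \leq u_j(S) + u_j(S_{-j}, s'_j)$, because every clause in $C_j$ is either already satisfied in $S$ or becomes satisfied when $j$ flips, by the flip property. Rewriting as $U_j - u_j(S) \leq u_j(S_{-j}, s'_j)$, summing over $j \in N$, and using the approximate-equilibrium bound $u_j(S_{-j}, s'_j) \leq (1+\eta) u_j(S)$,
\begin{equation*}
\sum_{j \in N} \bigl(U_j - u_j(S)\bigr) \;\leq\; \sum_{j \in N} u_j(S_{-j}, s'_j) \;\leq\; (1+\eta) \sum_{j \in N} u_j(S) \;\leq\; (1+\eta)\,k\,\Phi(S),
\end{equation*}
where the last inequality is Claim~\ref{cl:easy}. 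The left-hand side double-counts each clause $c$ unsatisfied in $S$, contributing $w_c$ once per player whose variable appears in $c$ (hence at least once), so it is bounded below by $U - \Phi(S)$. Combining yields $U - \Phi(S) \leq (1+\eta)\,k\,\Phi(S)$, and therefore $\Phi(S^*) \leq U \leq (k+1+k\eta)\,\Phi(S)$, the desired bound for $P_k$--{\sc Flip} games.

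For the three subclasses the plan is to tighten this chain. In {\sc nae}--$(\bar{k},k)$--{\sc Flip} games every clause has at least $\bar{k}$ variables, so the double-counted left-hand side actually dominates $\bar{k}\,(U-\Phi(S))$; to reach the $k$-independent constants $2+k\eta$ and $3+k\eta$ I would further partition the clauses satisfied in $S$ into \emph{stable} ones (NAE clauses with at least two true and two false literals, which remain satisfied under any single flip) and \emph{fragile} ones, and exploit the slack that stable clauses add to the key inequality $U_j - u_j(S) \leq u_j(S_{-j}, s'_j)$---stable clauses contribute to both $u_j(S)$ and $u_j(S_{-j}, s'_j)$, so the inequality is strictly loose, and this looseness can be accumulated over $j$. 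For {\sc Parity}--$k$--{\sc Flip} games with odd $k$, flipping any variable toggles the truth of every parity clause it belongs to, so the key property strengthens to the equality $U_j = u_j(S) + u_j(S_{-j}, s'_j)$; threading this equality through the derivation, together with the additional structural observation that flipping all players negates every odd-size clause, should shave the ``$+1$'' to produce $k+k\eta$.

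The main obstacle will be the NAE refinement: one must perform a clause-by-clause double count that carefully balances the contributions of stable versus fragile satisfied clauses on both sides of the inequality, so that the resulting bound is a constant independent of $k$ rather than the weaker $1 + (1+\eta)k/\bar{k}$ that falls out of the naive use of $|c| \geq \bar{k}$.
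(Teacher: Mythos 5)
Your argument for the general $P_k$--{\sc Flip} bound is correct, and it is in fact more direct than the paper's: the paper compares $S$ to a potential-maximizing state $S^*$ via the set $D$ of players whose strategies differ and a decomposition of clauses by how many $D$-players they contain, whereas you bypass $S^*$ entirely by bounding it against the total clause weight $U$ and charging each clause unsatisfied in $S$ to the flip property $U_j\le u_j(S)+u_j(S_{-j},s'_j)$ of some player it contains. The chain $U-\Phi(S)\le\sum_j(U_j-u_j(S))\le(1+\eta)\sum_j u_j(S)\le(1+\eta)k\,\Phi(S)$ is sound (modulo the harmless convention that $U$ should range over clauses containing at least one player, consistent with the definition of $\Phi$), and it delivers $k+1+k\eta$ cleanly.

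The three refined bounds, however, are not proved, and the gap is structural rather than a matter of detail. For the {\sc nae} cases, the factor $k$ in your chain enters at the very last step via $\sum_j u_j(S)\le k\,\Phi(S)$, and nothing in your ``stable versus fragile'' slack can remove it: the slack you identify lives in the inequality $U_j-u_j(S)\le u_j(S_{-j},s'_j)$, but after you multiply by $(1+\eta)$ and collapse $\sum_j u_j(S)$ to $k\,\Phi(S)$ the multiplicity information is gone. The fix is to keep the clause multiplicities $|c|$ on \emph{both} sides and cancel them: writing $u_j(S_{-j},s'_j)=U_j-\sum_{c\in\Lambda_j(S)}w_c$, where $\Lambda_j(S)$ is the set of satisfied clauses that $j$'s flip would falsify, the equilibrium condition sums to $\sum_{c\notin SAT(S)}|c|\,w_c\le\eta\sum_{c\in SAT(S)}|c|\,w_c+\sum_{c\in SAT(S)}\lambda_c w_c$ with $\lambda_c=|\{j:c\in\Lambda_j(S)\}|$; the quantity that matters is then $\lambda=\max_c\lambda_c$, which is at most $1$ for a satisfied {\sc nae}-clause with at least three literals and at most $2$ for one with two literals. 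This is exactly the bookkeeping the paper does (its $\Lambda_j$ and $\lambda_c$), and it is the one concrete idea your sketch is missing; your stable/fragile dichotomy is the right intuition but you never show how to accumulate it, and you yourself flag this as the unresolved obstacle. For {\sc Parity}--$k$ with odd $k$, the situation is worse: the saving there does not come from the per-player flip property at all (for parity clauses $\lambda_c$ can be as large as $k$), but from the observation that a satisfied clause all $k$ of whose variables are flipped between $S$ and $S^*$ is necessarily \emph{unsatisfied} in $S^*$, which tightens the upper bound on $\Phi(S^*)$ below $U$. Since your framework never introduces the set $D$ of players differing between $S$ and $S^*$, it has no handle on which clauses those are, and your stated mechanism (``flipping all players negates every odd-size clause'') is not the relevant statement. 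So the general bound stands on its own, but the three improved bounds require the paper's comparison-to-$S^*$ machinery (or an equivalent) that your proposal does not supply.
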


\begin{proof}
Consider a $P_k$--{\sc Flip} (sub)game among players in a set $R$ and with a set of clauses $C_R$. Consider an $(1+\eta)$-approximate pure Nash equilibrium $S$ and let $S^*$ be a state that maximizes the potential function. Clearly, this state is an $(1+\eta)$--approximate equilibrium for every $\eta\geq 0$. Let $D\subseteq R$ be the set of players that use different strategies in $S$ and $S^*$. We denote by $C^i\subseteq C_R$ the set of clauses that contain exactly $i$ players from $D$ for $i=0, 1, ..., k$. We use $C^i_j$ to denote the subset of $C^i$ in which player $j$ participates. Let $SAT^i_R(S) = SAT_R(S) \cap C^i$. Also, denote by $\Lambda_j(S)$ the subset of $SAT_j(S)$ consisting of the clauses that would become false by changing the strategy of player $j\in D$ (to her strategy in $S^*$). Let $\lambda_c(S) = |\{j\in D: c\in \Lambda_j(S)\}|$ and $\lambda = \max_{c\in C_R}{\lambda_c(S)}$.

Since every player $j$ in $D$ has no $(1+\eta)$-move in state $S$, we have $(1+\eta)\cdot u_j(S) \geq u_j(S_{-j},s'_j)$ and, equivalently,
\begin{eqnarray*}
(1+\eta) \cdot \sum_{c\in SAT_j(S)}{w_c} &\geq & \sum_{c\in C_j\setminus \Lambda_j(S)}{w_c}.
\end{eqnarray*}
By adding $\sum_{c\in \Lambda(j)}{w_c}$ to both sides, we get
\begin{eqnarray*}
(1+\eta) \cdot \sum_{c\in SAT_j(S)}{w_c} +\sum_{c\in \Lambda_j(S)}{w_c} &\geq & \sum_{c\in C_j}{w_c}.
\end{eqnarray*}
By summing over all players in $D$, we obtain that
\begin{eqnarray}\label{eq:stretch}
\sum_{i\in[k]}{\sum_{c\in SAT^i_R(S)}}{(i+i\eta+\lambda_c(S)) w_c} & \geq & \sum_{i\in[k]}{\sum_{c\in C^i}{i w_c}},
\end{eqnarray}
where $[k]$ denotes the set of integers $\{1, 2, ..., k\}$. 

Now, the potential of state $S^*$ is not higher than the total weight of all clauses in $C_R\setminus C^0$ plus the weight in satisfied clauses of $C^0$ (these clauses are satisfied in both states $S$ and $S^*$). Hence,
\begin{eqnarray*}
\Phi(S^*)&\leq &\sum_{c\in C_R\setminus C^0}{w_c}+\sum_{c\in SAT^0_R(S)}{w_c}\\
& = & \sum_{i\in[k]}{\sum_{c\in C^i}{i w_c}} - \sum_{i\in[k]}{\sum_{c\in C^i}{(i-1) w_c}}+\sum_{c\in SAT^0_R(S)}{w_c}\\
& \leq & \sum_{i\in[k]}{\sum_{c\in SAT^i_R(S)}{(i+i\eta+\lambda_c(S)) w_c}} - \sum_{i\in[k]}{\sum_{c\in SAT^i_R(S)}{(i-1) w_c}}+\sum_{c\in SAT^0_R(S)}{w_c}\\
&= & \sum_{i\in[k]}{\sum_{c\in SAT^i_R(S)}{(\lambda_c(S)+1+i\eta) w_c}}+\sum_{c\in SAT^0_R(S)}{w_c}\\
&\leq & (\lambda+1+k \eta) \cdot \sum_{i=0}^{k}{\sum_{c\in SAT^i_R(S)}{w_c}}\\
&=&  (\lambda+1+k \eta) \cdot \sum_{c\in SAT_R(S)}{w_c}\\
&=& (\lambda+1+k \eta)\cdot \Phi(S).
\end{eqnarray*}
The second inequality in the above derivation follows from (\ref{eq:stretch}) and from the observation that $SAT^i_R\subseteq C^i$ for every $i\in [k]$. The last inequality follows by the definition of $\lambda$. Now, for general $P_k$--{\sc Flip} games, the theorem is obvious since $\lambda\leq k$.

In order to prove the next two statements, we need an additional simple observation. For any true clause $c$ that is a {\sc nae}-clause with at least three literals, $c$ can become false by an unilateral change in at most one variable (i.e., $\lambda_c(S)\leq 1$ and, consequently, $\lambda\leq 1$). A true {\sc nae}-clause with two literals can become false by a unilateral change in any of its two variables (hence, $\lambda \leq 2$). 

Finally, in order to prove the bound on the stretch of {\sc Parity}--$k$--{\sc Flip} games with odd $k$, we first observe that a clause of $SAT_R^k(S)$ is not satisfied in $S^*$, since changing $k$ (an odd number) variables changes the parity of the whole Parity clause. Hence, we can repeat the last derivation starting with the stronger bound $\Phi_R(S^*) \leq \sum_{c\in C_R\setminus SAT_R^k(S)}{w_c}$ and obtain the improved upper bound of $(\lambda+k\eta) \cdot \Phi_R(S)$ on $\Phi_R(S^*)$.
\end{proof}

The bounds in Theorem \ref{thm:stretch} are tight; we show this for $\eta=0$ with four examples. First, let $k\geq 2$ and consider the a unit-weight clause with the variables $x_1, x_2, ..., x_k$ that is true if and only if the number of variables with value $1$ is either zero or has the same parity with $k$ (it can be easily seen that this constraint satisfies the property required by $P_k$--{\sc Flip} games). There are $k$ additional even clauses, each containing only the variable $x_i$ for $i=1, ..., k$. The state in which all players play $1$ is an equilibrium with potential $1$ while the state in which all players play $0$ has potential $k+1$. Second, consider the {\sc nae}--$(3,3)$--{\sc Flip} game with two players controlling the values of the variables $x$ and $y$ and two unit-weight {\sc nae}-clauses $c_1=(0,x,y)$ and $c_2=(y,1,1)$. The state in which the players play $x=0$ and $y=1$ is an equilibrium with potential $1$ while the state with $x=1$ and $y=0$ has potential $2$. Third, consider the $2$-player {\sc nae}--$(2,k)$--{\sc Flip} game with the three unit-weight clauses $c_1=(0,x)$, $c_2=(x,y)$, and $c_3=(y,1)$. The state in which the players play $x=0$ and $y=1$ is an equilibrium with potential $1$ while the state with $x=1$ and $y=0$ has potential $3$. Finally, for odd $k$, consider the $(k-1)$-player {\sc Parity}--$k$--{\sc Flip} game with a unit-weight even clause $c=(x_1, x_2, ..., x_{k-1},0)$ and $k-1$ additional unit-weight odd clauses, each containing only the variable $x_i$ for $i=1, ..., k-1$. The state in which all players play $0$ is an equilibrium with potential $1$ while the state in which all players play $1$ has potential $k$.

In the following, we use the notation $\theta(1+\eta)$ to denote our upper bound on the $(1+\eta)$-stretch of $P_k$--{\sc Flip} games (and clarify when we refer to the stretch of particular subclasses of $P_k$--{\sc Flip} games). We use simply $\theta$ to denote the upper bound on the $1$-stretch.

\section{The algorithm}\label{sec:alg}
The pseudocode of our algorithm appears below as Algorithm 1. We supplement this formal description with a detailed line--by--line explanation.

\IncMargin{2em}
\RestyleAlgo{boxed}
\LinesNumbered
\begin{algorithm}
%\SetKwData{Left}{left}\SetKwData{This}{this}\SetKwData{Up}{up}
%\SetKwFunction{Union}{Union}\SetKwFunction{FindCompress}{FindCompress}
\SetKwInOut{Input}{Input}\SetKwInOut{Output}{Output}

\Input{A $P_k$--{\sc Flip} game ${\cal G}$ with a set $N$ of $n$ players, an arbitrary initial state $S_{\mbox{\small in}}$, and $\varepsilon\in (0,1]$}
\Output{A state $S_{\mbox{\small out}}$ of ${\cal G}$}
   $q\leftarrow 1+\frac{\varepsilon}{3k}$\;
   $p\leftarrow\theta(q)+\varepsilon/3$\;
   $\Delta=200 p^3nk/\varepsilon^2$\;
   Set $U_{\min}\leftarrow\min_{j\in N}{U_j}$, $U_{\max}\leftarrow\max_{j\in N}{U_j}$, and
       $m\leftarrow1+\left\lfloor \log_{\Delta}{\left(U_{\max}/U_{\min}\right)}\right\rfloor$\;
   (Implicitly) partition players into \emph{blocks} $B_1, B_2,\ldots, B_m$, such that $j\in B_i$ implies that
       $U_j\in\left(U_{\max} \Delta^{-i},U_{\max}\Delta^{1-i}\right]$\;
   $S\leftarrow S_{\mbox{\small in}}$\;
   \While{there exists a player $j\in B_1$ such that $u_j(S_{-j},s'_j)>q\cdot u_j(S)$}
         {
         $S\leftarrow (S_{-j},s'_j)$\;
         }
   \For{phase $i\leftarrow 1$ \KwTo $m-1$ such that $B_i\not=\emptyset$}{
         \While{there exists a player $j$ that either belongs to $B_i$ and satisfies $u_j(S_{-j},s'_j)>p\cdot u_j(S)$ or belongs to $B_{i+1}$ and satisfies $u_j(S_{-j},s'_j)>q\cdot u_j(S)$}
         {
         $S\leftarrow (S_{-j},s'_j)$\;
   }}
   $S_{\mbox{\small out}} \leftarrow S$\;

\caption{Computing approximate equilibria in $P_k$--{\sc Flip} games.}\label{alg}
\end{algorithm}
\DecMargin{2em}
The algorithm takes as input a $P_k$--{\sc Flip} game ${\cal G}$ with $n$ players, an initial state $S_{\mbox{\small in}}$, and an accuracy parameter $\varepsilon\in (0,1]$. Starting from state $S_{\mbox{\small in}}$, it identifies a sequence of moves that lead to a state $S_{\mbox{\small out}}$; this is the output of the algorithm. As we will prove later, $S_{\mbox{\small out}}$ is an approximate equilibrium. The algorithm starts (lines 1 and 2) by setting the values of parameters $q$ and $p$. Parameter $q$ has a value very close to $1$ (namely, $q=1+\frac{\varepsilon}{3k}$) and parameter $p$ has a value slightly higher than the $q$-stretch of the class to which the input game belongs (namely, $p=\theta(q)+\varepsilon/3$). In particular, using our upper bounds on $\theta(q)$ from Theorem \ref{thm:stretch}, $p$ is set to be $k+1+2\varepsilon/3$ in general, $2+2\varepsilon/3$ if ${\cal G}$ is a {\sc nae}--$(3,k)$--{\sc Flip} game, $3+2\varepsilon/3$ if it is a {\sc nae}--$(2,k)$--{\sc Flip} games, and $k+2\varepsilon/3$ if it is a {\sc Parity}--$k$ game and $k$ is odd. The algorithm also sets the value of parameter $\Delta$ to be a polynomial depending on $n$, $k$, $p$, and $1/\varepsilon$ (line 3). Then (lines 4-5), it implicitly partitions the players into blocks $B_1$, $B_2$, ..., $B_m$ according to their maximum utility. Denoting by $U_{\max}$ the maximum values among all players' maximum utilities, block $B_i$ consists of the players $j$ with maximum utility $U_j\in (U_{\max}\Delta^{-i},U_{\max}\Delta^{1-i}]$. By the definition of $\Delta$, the players in the same block have polynomially related maximum utilities.

The sequence of moves from state $S_{\mbox{\small in}}$ to state $S_{\mbox{\small out}}$ is computed by the code in the lines 6-15. The subsequence of moves described in lines 7-9 constitutes phase $0$. During phase $0$, the players in block $B_1$ make $q$-moves. After that, each phase $i$ for $i\geq 1$ consists of $p$-moves of players in block $B_i$ and $q$-moves of players in block $B_{i+1}$. Strategies of players in block $B_i$ are irrevocably decided at the end of phase $i$.

We are ready to state our main result which we will prove in the next section.

\begin{theorem}\label{thm:main}
On input a $P_k$--{\sc Flip} game ${\cal G}$ with $n$ players, an initial state $S_{\mbox{\small in}}$, and $\varepsilon\in (0,1]$, Algorithm 1 computes a sequence of at most $\mbox{poly}(n,k,1/e)$ moves that starts from $S_{\mbox{\small in}}$ and converges to a $(k+1+\varepsilon)$--approximate pure Nash equilibrium $S_{\mbox{\small out}}$. The approximation guarantee is at most $2+\varepsilon$ when ${\cal G}$ is a {\sc nae}--$(3,k)$--{\sc Flip} game, at most $3+\varepsilon$ when it is a {\sc nae}--$(2,k)$--{\sc Flip} games, and at most $k+\varepsilon$ when it is a {\sc Parity}--$k$--{\sc Flip} game and $k$ is odd.
\end{theorem}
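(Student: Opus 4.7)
I would prove two things: (a) the total number of moves performed by Algorithm~\ref{alg} is $\mathrm{poly}(n,k,1/\varepsilon)$, and (b) the output state $S_{\text{out}}$ is a $(p+\varepsilon/3)$-approximate equilibrium. Combined with the bound $\theta(q)\leq k+1+\varepsilon/3$ (and its refinements) from Theorem~\ref{thm:stretch}, and recalling that $p=\theta(q)+\varepsilon/3$, this immediately yields the four approximation guarantees in the theorem.

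For step (a), the number of blocks satisfies $m=O(\log_\Delta(U_{\max}/U_{\min}))=\mathrm{poly}(n,k,1/\varepsilon)$, because $\Delta$ is itself polynomial in these parameters and $\log(U_{\max}/U_{\min})$ is at most linear in the bit-length of the input weights. Within any phase $i$, only players of $B_i\cup B_{i+1}$ move, and by the block partition their maximum utilities lie within a factor of $\Delta^2$ of one another. A standard potential argument applied to the subgame restricted to $B_i\cup B_{i+1}$, relying on Claim~\ref{cl:easy} to compare individual utilities to $\Phi_{B_i\cup B_{i+1}}$, then bounds the number of $p$-moves and $q$-moves in the phase by $\mathrm{poly}(n,k,1/\varepsilon)$: each such move multiplies the mover's utility by at least $p$ or $q$, and utilities cannot keep growing beyond the block range.

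For step (b), let $S^{(i)}$ denote the state at the end of phase $i$. For $j\in B_m$, the exit condition of phase $m-1$ already says that $j$ has no $q$-move at $S_{\text{out}}$; since $q<p+\varepsilon/3$, this is enough. For $j\in B_i$ with $i<m$, the exit condition of phase $i$ gives $u_j((S^{(i)})_{-j},s'_j)\leq p\cdot u_j(S^{(i)})$, and combining with the $P_k$--{\sc Flip} inequality $U_j\leq u_j(S^{(i)})+u_j((S^{(i)})_{-j},s'_j)$ gives $u_j(S^{(i)})\geq U_j/(p+1)$. Player $j$ is then irrevocably frozen, so what remains is to control how $u_j$ and the opposite-strategy payoff evolve throughout phases $i+1,\ldots,m-1$.

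The main obstacle is a \emph{stability lemma} asserting that $|u_j(S_{\text{out}})-u_j(S^{(i)})|\leq \delta U_j$ (and similarly for the opposite-strategy payoff) for some $\delta=O(\varepsilon/p^2)$. The structural input is that every move after phase $i$ is performed by some $j'\in B_{i'}$ with $i'>i$, for which $U_{j'}\leq U_{\max}\Delta^{-i}\leq U_j/\Delta$, so any single subsequent move affects $u_j$ by at most $U_j/\Delta$. I would bound the aggregate weight of clauses incident to $j$ that change truth value in phases $i+1,\ldots,m-1$ by charging each such change to a move in the restricted subgame on $B_{i+1}\cup\cdots\cup B_m$, converting potential increments into utility bounds via Claim~\ref{cl:easy} and multiplying by the polynomial move count from step (a); the choice $\Delta=200p^3nk/\varepsilon^2$ is calibrated precisely so that the total effect on $j$ fits inside $\delta U_j$. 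Feeding this stability bound together with $u_j(S^{(i)})\geq U_j/(p+1)$ into the definition of a $\rho$-move shows that no $B_i$ player has a $(p+\varepsilon/3)$-move at $S_{\text{out}}$, and substituting the class-specific bounds on $\theta(q)$ from Theorem~\ref{thm:stretch} into $p+\varepsilon/3$ produces the guarantees $k+1+\varepsilon$, $2+\varepsilon$, $3+\varepsilon$, and $k+\varepsilon$ respectively.
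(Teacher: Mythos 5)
Your high-level skeleton matches the paper's (a potential argument for the move count, plus a ``stability'' argument showing that a player frozen at the end of her phase is barely disturbed afterwards), but the quantitative heart of the stability lemma is missing, and the mechanisms you propose for it do not close. First, the chain $U_{j'}\leq U_{\max}\Delta^{-i}\leq U_j/\Delta$ is backwards: for $j\in B_i$ we only know $U_j\in(U_{\max}\Delta^{-i},U_{\max}\Delta^{1-i}]$, so $U_j/\Delta\leq U_{\max}\Delta^{-i}=W_{i+1}$, and a single move by some $j'\in B_{i+1}$ can change $u_j$ by up to $U_{j'}\leq W_{i+1}$, which can be essentially as large as $U_j$ itself and certainly comparable to $u_j(S^{(i)})\geq W_{i+1}/(p+1)$. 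So ``per-move effect times polynomial move count'' cannot yield $\delta U_j$ with $\delta=O(\varepsilon/p^2)$; a priori even one later move could be catastrophic. The aggregate version fares no better with only the ingredients you list: the total weight of clauses reachable by players of $B_{i+1}\cup\cdots\cup B_m$ is bounded only by $nW_{i+1}$, which is about $n(p+1)$ times $u_j(S^{(i)})$ --- far too large --- and multiplying per-move potential increments by the move count from step (a) only makes the bound worse, not better.

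The missing idea is part 1 of Lemma~\ref{lem:key}, proved via Claim~\ref{cl:vanila}: the players who \emph{actually move} in phase $i'$ satisfy $\sum_{r\in R_{i'}}U_r\leq 10pknW_{i'+1}/\varepsilon$, a factor of roughly $\Delta$ below the trivial bound $nW_{i'}$. This is where the stretch bound does its real work --- not merely in converting $p+\varepsilon/3$ into the final constants. One shows $\Phi_{R_{i'}}(S^{i'-1})\leq 3pnW_{i'+1}/\varepsilon$ because the movers of block $B_{i'}$ begin the phase at a $q$-approximate equilibrium of their subgame; were their subgame potential larger, their subsequent $p$-moves would drive that potential up by more than a factor of $\theta(q)$, contradicting Theorem~\ref{thm:stretch}. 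Since $p>\theta(q)$, this forces the moving set to carry little weight, the resulting bounds form a geometric series over later phases summing to at most $W_{t+1}\varepsilon/(10p^2)\ll u_j(S^t)$, and only then does the ``no $(p+\varepsilon/3)$-move at $S_{\mbox{\small out}}$'' conclusion follow. Your step (a) is essentially fine for polynomiality (note that only nonempty blocks trigger phases, so at most $n$ phases execute, independently of the bit-length of the weights; the paper instead reuses part 2 of Lemma~\ref{lem:key} to get a sharper count), but step (b) as described has a genuine gap.
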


\section{Proof of Theorem \ref{thm:main}}\label{sec:proof}
Before presenting the proof of Theorem \ref{thm:main}, we give some intuition behind our analysis. The analysis uses two properties that are formally stated in Lemma \ref{lem:key}. What this lemma essentially says is that, during each phase, the total utility of the moving players as well as an increase in the potential of the subgame among these players are small. The first property is used in Lemma \ref{lem:negligible-effect} to prove that, once the strategy of a player is irrevocably decided, later phases may have only a negligible effect on her. And since no player has a $p$-move to make at the end of the phase when her strategy is decided, she cannot improve her utility by a factor of (almost) $p$ until the end of the algorithm. Together with the fact that each player's move increases her utility by some non-negligible amount, the second property is used in Lemma \ref{lem:rtime} to bound the total number of moves.

In our analysis, we denote by $S^i$ the state reached at the end of phase $i\geq 0$, i.e., $S_{\mbox{\small out}}=S^{m-1}$. We also denote by $R_i$ the set of players that move during phase $i$. We also denote the upper boundary of block $B_i$ by $W_i$ and by $W_{m+1}$ the lower boundary of block $B_m$, i.e., $W_i=U_{\max}\Delta^{1-i}$ for $i=1, 2, ..., m+1$. So, the players of block $B_i$ are those with maximum utility $U_j\in (W_{i+1},W_i]$.
\begin{lemma}\label{lem:key}
For every phase $i\geq 1$, it holds that
\begin{enumerate}
\item $\sum_{j\in R_i}{U_j} \leq 10pknW_{i+1}/\varepsilon$
\item $\Phi_{R_i}(S^i)-\Phi_{R_i}(S^{i-1}) \leq 3p^2 n W_{i+1}/\varepsilon$.
\end{enumerate}
\end{lemma}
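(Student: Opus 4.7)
The plan is to prove both statements together via a unified analysis of the moves executed during phase $i$, guided by three key observations. First, only players in $R_i \subseteq B_i \cup B_{i+1}$ move during phase $i$, so only clauses in $C_{R_i}$ can flip truth value; therefore $\Phi(S^i) - \Phi(S^{i-1}) = \Phi_{R_i}(S^i) - \Phi_{R_i}(S^{i-1})$, and this common value equals the sum over phase-$i$ moves of the moving player's utility gain. Second, for any $p$-move (respectively, $q$-move) by $j$, combining the defining $P_k$--{\sc Flip} property $U_j \leq u_j(S) + u_j(S_{-j},s'_j)$ with the move's improvement inequality yields $u_j(\mathrm{before}) \leq U_j/(p+1)$ and $u_j(\mathrm{after}) \geq pU_j/(p+1)$ (analogously for $q$); in particular, each such move contributes a gain of at least $(p-1)U_j/(p+1)$ (respectively, $(q-1)U_j/(q+1)$). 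Third, at $S^{i-1}$ no player $j \in B_i$ has a $q$-move (ensured by the end of phase $i-1$, or by the initial while loop when $i=1$), so $u_j(S^{i-1}) \geq U_j/(1+q)$.

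For Statement 1, I would split $\sum_{j\in R_i} U_j$ by block. The contribution from $R_i \cap B_{i+1}$ is at most $nW_{i+1}$ by definition. For $j \in R_i \cap B_i$, she necessarily makes at least one $p$-move during phase $i$; applying the third observation at $S^{i-1}$ and the second just before her first such $p$-move, $u_j$ must have dropped by at least $U_j \cdot (p-q)/((1+q)(p+1))$, which, given $p \geq \theta(q)+\varepsilon/3$ and $q = 1+\varepsilon/(3k)$, is a constant fraction of $U_j$. Such drops are caused by moves of other players, and for any single move by $j_t$ the total negative impact on the utilities of players other than $j_t$ is at most $(k-1)\,u_{j_t}(\mathrm{before})$, since the weight of newly-unsatisfied clauses is at most $u_{j_t}(\mathrm{before})$ and each such clause involves at most $k-1$ other players. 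Summing this bound over all phase-$i$ moves, and using the gain bound from the second observation to express $u_{j_t}(\mathrm{before})$ in units of potential gain that are controlled by Statement 2, yields the desired bound $\sum_{j \in R_i \cap B_i} U_j = O(pkn W_{i+1}/\varepsilon)$.

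For Statement 2, the total gain equals $\sum_t \Delta_t$, which I would bound by counting moves. A second $p$-move by the same player $j$ requires $u_j$ to fall from at least $pU_j/(p+1)$ back to at most $U_j/(p+1)$, a drop of at least $(p-1)U_j/(p+1)$ induced by other players' moves; analogously for successive $q$-moves. Combining this with the per-move bound on induced drops and the scale separation $W_i = \Delta W_{i+1}$ from the algorithm's choice of $\Delta$, which prevents the lower-utility $B_{i+1}$ moves from over-perturbing $B_i$ players, produces the desired upper bound on the total gain. The main obstacle I anticipate is the two-way dependence of the statements: Statement 1 leans on Statement 2 to bound cumulative induced drops, while the move-count analysis underlying Statement 2 mirrors the drop-tracking analysis of Statement 1. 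I expect to disentangle this either by a joint inductive argument over moves within the phase, or by a single combined accounting that exploits the parameter gap $p - q \geq 1$ enforced by the algorithm's choices of $p$ and $q$.
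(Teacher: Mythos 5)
Your plan takes a genuinely different route --- a move-by-move accounting of utility drops and gains --- but it has a gap that I do not believe can be closed as described. The paper's proof hinges on a single potential bound (its Claim~\ref{cl:vanila}): $\Phi_{R_i}(S^{i-1}) \leq 3pnW_{i+1}/\varepsilon$, proved by contradiction via Theorem~\ref{thm:stretch}. The point is that $S^{i-1}$ is a $q$-approximate equilibrium of the subgame among the players of $R_i\cap B_i$, and since every such player's \emph{last} move in phase $i$ is a $p$-move, the subgame potential would have to grow by a factor of nearly $p>\theta(q)+\varepsilon/3$ during the phase if it started large --- yielding two $q$-approximate equilibria of that subgame whose potentials differ by more than the stretch, a contradiction. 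Both items of the lemma then fall out of this one bound (item 1 via $U_j\leq(1+q)u_j(S^{i-1})$ and Claim~\ref{cl:easy}; item 2 via the stretch bound again). Your proposal never invokes the stretch theorem, and that is the missing ingredient.

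Concretely, here is where your accounting breaks. For item 2, the total potential gain includes the gain from each $B_i$-player's \emph{first} $p$-move of the phase, which requires no prior drop and can be as large as $U_j\leq W_i=\Delta W_{i+1}$; with up to $n$ such players, move-counting alone only gives $n\Delta W_{i+1}$, far above the claimed $3p^2nW_{i+1}/\varepsilon$ (recall $\Delta$ is itself a large polynomial in $n,k,1/\varepsilon$). The only way to do better is to first bound $\sum_{j\in R_i\cap B_i}U_j$, i.e., item 1. But your item 1 bounds the required activation drops by $(k-1)\sum_t u_{j_t}(\mathrm{before})\leq\frac{k-1}{q-1}\sum_t(\mbox{gain}_t)$, i.e., by $\frac{3k(k-1)}{\varepsilon}$ times the total potential gain of item 2. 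The resulting system of inequalities has the shape $X\lesssim \frac{k}{\varepsilon}\,Y$ and $Y\lesssim X+\frac{k^2}{\varepsilon}\,Y$; since the coefficient $\frac{k^2}{\varepsilon}$ exceeds $1$, the recursion does not close, and neither a joint induction over moves nor the gap $p-q\geq 1$ rescues it --- each trip around the loop loses a factor of order $k^2/\varepsilon$. (Even taking item 2 for granted, your route to item 1 would only give $O(p^3k^2nW_{i+1}/\varepsilon^2)$, weaker than the stated $10pknW_{i+1}/\varepsilon$.) The local observations you list (the $P_k$--{\sc Flip} property $U_j\leq u_j(S)+u_j(S_{-j},s'_j)$, the no-$q$-move condition at $S^{i-1}$, and the per-move gain lower bounds) are all correct and are all used in the paper, but they must be combined with the global stretch comparison between $\Phi_{R_i\cap B_i}$ at the start and at the end of the phase, not with a drop-propagation argument.
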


\begin{proof}
First observe that players not in $R_i$ have the same set of strategies in states $S^{i-1}$ and $S^i$. Furthermore, the total weight of clauses depending on variables that are controlled by players from $R_i\cap B_{i+1}$ is at most $nW_{i+1}$. Hence, by the definition of the subgame potential, we have that the potential of the state $(S^{i-1}_{-R_{i}\cap B_i},S^i_{R_i\cap B_i})$ in which the players in $R_i\cap B_i$ play their strategies in state $S^i$ and the remaining players play their strategies in $S^{i-1}$ satisfies
\begin{eqnarray}\label{eq:1}
\Phi_{R_i\cap B_i}(S^{i-1}_{-R_i\cap B_i},S^i_{R_i\cap B_i}) &\geq & \Phi_{R_i}(S^i)-nW_{i+1}.
\end{eqnarray}
We will use inequality (\ref{eq:1}) in the proof of the next claim that provides a bound on the potential $\Phi_{R_i}(S^{i-1})$ as well as later in the current proof.

\begin{claim}\label{cl:vanila}
$\Phi_{R_i}(S^{i-1}) \leq 3p n W_{i+1}/\varepsilon$.
\end{claim}
\begin{proof}
We assume on the contrary that $\Phi_{R_i}(S^{i-1}) > 3p n W_{i+1}/\varepsilon$ and we are going to conclude that the potential of the state $(S^{i-1}_{-R_{i}\cap B_i},S^i_{R_i\cap B_i})$ satisfies $\Phi_{R_i\cap B_i}(S^{i-1}_{-R_{i}\cap B_i},S^i_{R_i\cap B_i})>\theta(q) \cdot \Phi_{R_i\cap B_i}(S^{i-1})$. By Theorem \ref{thm:stretch}, this would contradict the fact that $S^{i-1}$ is the output of phase $i-1$, i.e., a $q$-approximate equilibrium of the subgame among the players in $R_i\cap B_i$, since there is another $q$--approximate equilibrium (the one that can be reached from $(S^{i-1}_{-R_{i}\cap B_i},S^i_{R_i\cap B_i})$ with $q$-moves by the players in $R_i\cap B_i$) with a potential that is higher than $\theta(q)$ times the potential at state $S^{i-1}$.

We denote by $\ell(j)$ the utility of  player $j\in R_i\cap B_i$ right after she makes her last move in phase $i$. Then we have
\begin{eqnarray}\label{eq:2}
\Phi_{R_i}(S^i) - \Phi_{R_i}(S^{i-1}) &\geq & (1-1/p) \cdot \sum_{j\in R_i\cap B_i}{\ell(j)}.
\end{eqnarray}
Indeed, the last move of a player $j\in R_i\cap B_i$ increases her utility by a factor of at least $p$ and the difference $\Phi_{R_i}(S^i) - \Phi_{R_i}(S^{i-1})$ equals to the total increase in the utility of the deviating players within the phase.

Furthermore, we claim that
\begin{eqnarray}\label{eq:3}
\sum_{j\in R_i\cap B_i}{\ell(j)}+nW_{i+1} &\geq & \Phi_{R_i}(S^i).
\end{eqnarray}
To see why \eqref{eq:3} is true, observe that the right-hand side is the sum of the weights of the clauses in $SAT_{R_i}(S^i)$. The term $nW_{i+1}$ is an upper bound on the total weight of the clauses in $SAT_{R_i\cap B_{i+1}}(S^i)$. The weight of each of the remaining ones (i.e., the clauses in $SAT_{R_i}(S^i)\setminus SAT_{R_i\cap B_{i+1}}(S^i)$) is accounted for at least once in the sum $\sum_{j\in R_i\cap B_i}{\ell(j)}$, as part of the utility of some player from $R_i\cap B_i$ after her last move.

By (\ref{eq:2}) and (\ref{eq:3}) (i.e., by multiplying (\ref{eq:2}) by $p$ and (\ref{eq:3}) by $p-1$ and summing them), we obtain that
\begin{eqnarray}\label{eq:4}
\Phi_{R_i}(S^i) & \geq & p\cdot \Phi_{R_i}(S^{i-1})-(p-1)nW_{i+1}.
\end{eqnarray}
Hence, using (\ref{eq:1}), (\ref{eq:4}), the definition of $p$, and the second inequality of Claim \ref{cl:easy}, we obtain
\begin{eqnarray*}
\Phi_{R_i\cap B_i}(S^{i-1}_{-R_i\cap B_i},S^i_{R_i\cap B_i}) & \geq & \Phi_{R_i}(S^i)-nW_{i+1}\\
& \geq & p \cdot \Phi_{R_i}(S^{i-1}) - pnW_{i+1}\\
&>& (p-\varepsilon/3) \cdot \Phi_{R_i}(S^{i-1})\\
&\geq & \theta(q) \cdot \Phi_{R_i\cap B_i}(S^{i-1}).
\end{eqnarray*}
We have obtained the desired contradiction. %So, our assumption was false and it must be $\Phi_{R_i}(S^{i-1}) \leq 3pnW_{i+1}/\varepsilon$.
\end{proof}

Using the observation that no player in $R_i\cap B_i$ has a $q$-move to make at the end of phase $i-1$ (i.e., at state $S^{i-1}$) as well as the first inequality of Claim \ref{cl:easy}, we obtain that
\begin{eqnarray*}
\sum_{j\in R_i\cap B_i}{U_j} &\leq & \sum_{j\in R_i\cap B_i}{\left(u_j (S^{i-1}) + u_j (S^{i-1}_{-j},s'_j)\right)}\\
&\leq & \sum_{j\in R_i\cap B_i}{(1+q)u_j (S^{i-1})}\\
&\leq & (1+q)k\cdot \Phi_{R_i\cap B_i}(S^{i-1})\\
&\leq & 9pknW_{i+1}/\varepsilon.
\end{eqnarray*}
The proof of the first inequality in the statement of the lemma follows by observing that the total utility of the players in $R_i\cap B_{i+1}$ is at most $nW_{i+1}$.

In order to prove the second inequality we use inequality \eqref{eq:1}, the $q$-stretch bound for the subgame among the players in $R_i\cap B_i$, the fact that $\theta(q)\leq p$, the second inequality of Claim \ref{cl:easy}, and the bound on $\Phi_{R_i}(S^{i-1})$ from Claim~\ref{cl:vanila}.
\begin{eqnarray*}
\Phi_{R_i}(S^i)-\Phi_{R_i}(S^{i-1}) &\leq & \Phi_{R_i\cap B_i}(S^{i-1}_{-R_i\cap B_i},S^i_{R_i\cap B_i})-\Phi_{R_i}(S^{i-1}) +nW_{i+1}\\
&\leq & \theta(q) \cdot \Phi_{R_i\cap B_i}(S^{i-1})-\Phi_{R_i}(S^{i-1}) +nW_{i+1}\\
&\leq & (p-1)\cdot \Phi_{R_i}(S^{i-1})+nW_{i+1}\\
&\leq & 3p^2n W_{i+1}/\varepsilon.
\end{eqnarray*}

\end{proof}

The first property of Lemma \ref{lem:key} indicates that the total weight of the moving players in phase $i$ is significantly smaller than the upper boundary of block $B_i$. In Lemma \ref{lem:negligible-effect} we combine this with the fact that the upper boundary of subsequent blocks decreases exponentially and formally prove that, after the strategy of a player is irrevocably decided, subsequent phases may have only a negligible effect on her. Recall that $\theta$ is the stretch of the class of games to which the input game belongs to and equals $k+1$ for $P_k$--{\sc Flip} games, $3$ for {\sc nae}--$(2,k)$--{\sc Flip} games, and $2$ for {\sc nae}--$(3,k)$--{\sc Flip} games, and $k-1$ for {\sc Parity}--$k$--{\sc Flip} games with odd $k$.

\begin{lemma}\label{lem:negligible-effect}
The state $S_{\mbox{\small out}}$ is a $(\theta+\varepsilon)$--approximate pure Nash equilibrium.
\end{lemma}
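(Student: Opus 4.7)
The plan is to verify that every player $j$ admits no $(\theta+\varepsilon)$-move at $S_{\mbox{\small out}}$. Let $i$ be the index of the block containing $j$. The case $j\in B_m$ is immediate: $j$ only makes $q$-moves (in phase $m-1$), so at $S_{\mbox{\small out}}=S^{m-1}$ she admits no $q$-move and hence no $(\theta+\varepsilon)$-move, since $q\leq 1+\varepsilon\leq \theta+\varepsilon$. I therefore focus on the main case $j\in B_i$ with $1\leq i\leq m-1$, in which $j$ finishes her $p$-moves at the end of phase $i$ and thus admits no $p$-move at the state $S^i$.

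Two ingredients will drive the argument. First, combining the no-$p$-move condition $u_j((S^i)_{-j},s'_j)\leq p\cdot u_j(S^i)$ with the $P_k$--{\sc Flip} property $U_j\leq u_j(S^i)+u_j((S^i)_{-j},s'_j)$ yields the baseline lower bound
\[
u_j(S^i) \;\geq\; U_j/(p+1) \;>\; W_{i+1}/(p+1).
\]
Second, I bound the drift of $j$'s utility between $S^i$ and $S_{\mbox{\small out}}$. Only players in $\bigcup_{l>i}R_l$ change strategies after phase $i$, so both $|u_j(S^i)-u_j(S_{\mbox{\small out}})|$ and $|u_j((S^i)_{-j},s'_j)-u_j((S_{\mbox{\small out}})_{-j},s'_j)|$ are bounded by the total weight of the clauses in $C_j$ shared with some mover in $\bigcup_{l>i}R_l$. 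Each clause shared by $j$ and some $j'\in R_l$ contributes at most $U_{j'}$ to $\sum_{j'\in R_l}U_{j'}$, which Lemma~\ref{lem:key}(1) bounds by $10pknW_{l+1}/\varepsilon$. Summing over $l>i$ and using the geometric relation $W_{l+1}=W_{i+1}\Delta^{i-l}$, the drift is at most $E:=10pknW_{i+1}/(\varepsilon(\Delta-1))$, and the choice $\Delta=200p^3nk/\varepsilon^2$ makes $E$ of order $\varepsilon W_{i+1}/p^2$, a small fraction of the baseline $W_{i+1}/(p+1)$.

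Combining the two bounds gives
\[
u_j((S_{\mbox{\small out}})_{-j},s'_j) \;\leq\; u_j((S^i)_{-j},s'_j)+E \;\leq\; p\cdot u_j(S^i)+E \;\leq\; p\cdot u_j(S_{\mbox{\small out}})+(p+1)E.
\]
Theorem~\ref{thm:stretch} applied with $\eta=\varepsilon/(3k)$ yields $\theta(q)\leq \theta+\varepsilon/3$ and hence $p\leq \theta+2\varepsilon/3$, so it suffices to show $(p+1)E\leq (\varepsilon/3)\,u_j(S_{\mbox{\small out}})$. Using $u_j(S_{\mbox{\small out}})\geq W_{i+1}/(p+1)-E$ together with $E\leq \varepsilon W_{i+1}/(10p^2)$, this reduces to a routine numerical inequality in $p$ and $\varepsilon$ that is satisfied by the chosen $\Delta$.

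The main obstacle I expect is calibrating $\Delta$ so that the drift across phases remains small relative to $j$'s baseline utility at her last $p$-move phase. The large polynomial choice $\Delta=200p^3nk/\varepsilon^2$ is designed precisely so that the per-phase bound from Lemma~\ref{lem:key}(1) telescopes, via the geometric decay of the $W_l$, into a tiny residual error; no additional structural argument is needed beyond the $P_k$--{\sc Flip} property and Lemma~\ref{lem:key}.
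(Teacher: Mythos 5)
Your proposal is correct and follows essentially the same route as the paper's proof: the baseline bound $u_j(S^i)\geq W_{i+1}/(p+1)$ from the no-$p$-move condition and the $P_k$--{\sc Flip} property, the drift bound $\sum_{l>i}\sum_{r\in R_l}U_r\leq \varepsilon W_{i+1}/(10p^2)$ obtained from Lemma~\ref{lem:key}(1) and the geometric decay of the block boundaries, and the final absorption of the drift into a $(p+\varepsilon/3)=(\theta+\varepsilon)$-approximation. The only differences are cosmetic (you rearrange the closing chain of inequalities into the condition $(p+1)E\leq(\varepsilon/3)u_j(S_{\mbox{\small out}})$, which indeed reduces to $7p^2-7p-4\geq 0$ for $p\geq 2$, and you dispatch $B_m$ via the $q$-move condition rather than folding it into the last phase as the paper does).
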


\begin{proof}
By the definition of phase $m-1$, the players in blocks $B_{m-1}$ and $B_m$ have no $p$-move to make at the end of phase $m-1$. We will consider a player $j$ belonging to block $B_t$ whose strategy is irrevocably decided at the end of phase $t$ with $t\leq m-2$, and will show that she has no $(p+\varepsilon/3)$-move to make at the end of phase $m-1$ (i.e., at state $S^{m-1}=S_{\mbox{\small out}}$). The lemma will then follow since $p+\varepsilon/3=\theta(1+\frac{\varepsilon}{3k})+2\varepsilon/3=\theta+\varepsilon$.

Let $s_j$ be the strategy used by player $j$ at the end of phase $t$. Using Lemma \ref{lem:key} and the definition of the block boundaries, we can bound the quantity $\sum_{i=t+1}^{m-1}{\sum_{r\in R_i}{U_{r}}}$. Thus, we get an upper bound on the total weight of clauses with players that move in phases $t+1, ..., m-1$, as follows:
\begin{eqnarray}\nonumber
\sum_{i=t+1}^{m-1}{\sum_{r\in R_i}{U_{r}}} &\leq & \sum_{i=t+1}^{m-1}{10pnkW_{i+1}/\varepsilon}\\\nonumber
&\leq & \frac{10pnkW_{t+1}}{\varepsilon}\sum_{i=1}^{\infty}{\Delta^{-i}}\\\nonumber
&=& \frac{10pnkW_{t+1}}{\varepsilon (\Delta-1)}\\\label{eq:all-weights}
&\leq & \frac{W_{t+1}\varepsilon}{10p^2}.
\end{eqnarray}
The last inequality follows by the definition of $\Delta$ and the fact that $\Delta-1\geq \Delta/2$.

Now observe that since player $j$ has no $p$-move at the end of phase $t$ (i.e., at state $S^t$), it holds that
$u_j(S^t)\geq u_j(S^t_{-j},s'_j)/p$ and $W_{t+1} \leq u_j(S^t)+u_j(S^t_{-j},s'_j) \leq (1+p)u_j(S^t)$, i.e., $u_j(S^t)\geq \frac{W_{t+1}}{1+p}$. Furthermore, during phases $t+1, ..., m-1$, the total change in the utility of player $j$ or in the utility player $j$ would have by deviating is at most $\sum_{i=t+1}^{m-1}{\sum_{r\in R_i}{U_{r}}}$. Using these observations and inequality (\ref{eq:all-weights}), we have
\begin{eqnarray*}
u_j(S^{m-1}) &\geq & u_j(S^t) - \sum_{i=t+1}^{m-1}{\sum_{r\in R_i}{U_{r}}}\\
&\geq & \frac{p}{p+\varepsilon/3}u_j(S^t)+\frac{\varepsilon/3}{p+\varepsilon/3}\frac{W_{t+1}}{1+p}-\sum_{i=t+1}^{m-1}{\sum_{r\in R_i}{U_{r}}}\\
&\geq & \frac{1}{p+\varepsilon/3}u_j(S^t_{-j},s'_j) +\frac{W_{t+1}\varepsilon}{5p(p+\varepsilon/3)}-\sum_{i=t+1}^{m-1}{\sum_{r\in R_i}{U_{r}}}\\
&\geq & \frac{1}{p+\varepsilon/3}u_j(S^{m-1}_{-j},s'_j) +\frac{W_{t+1}\varepsilon}{5p(p+\varepsilon/3)}-\left(1+\frac{1}{p+\varepsilon/3}\right)\sum_{i=t+1}^{m-1}{\sum_{r\in R_i}{U_{r}}}\\
&\geq & \frac{1}{p+\varepsilon/3}u_j(S^{m-1}_{-j},s'_j) +\frac{W_{t+1}\varepsilon}{5p(p+\varepsilon/3)}-\frac{2p}{p+\varepsilon/3}\sum_{i=t+1}^{m-1}{\sum_{r\in R_i}{U_{r}}}\\
&\geq & \frac{1}{p+\varepsilon/3}u_j(S^{m-1}_{-j},s'_j),
\end{eqnarray*}
as desired. In the third and fifth inequalities we have used the inequalities $3(1+p)\leq 5p$ and $p+1+\varepsilon/3 \leq 2p$ which follow since $p\geq 2$ and $\varepsilon\in (0,1]$. This completes the proof of the lemma.
\end{proof}

We conclude the proof of Theorem \ref{thm:main} by bounding the running time of the algorithm.
\begin{lemma}\label{lem:rtime}
On input of $P_k$--{\sc Flip} (in particular, {\sc nae}--$(\bar{k},k)$--{\sc Flip}) game, the algorithm identifies a sequence of at most $\bigO(n^3 k^7/\varepsilon^4)$ (in particular, $\bigO(n^3k^2/\varepsilon^4)$) moves.
\end{lemma}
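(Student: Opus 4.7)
The plan is to bound the number of moves within each phase by a polynomial in $n,k,1/\varepsilon$ and then sum across phases. Since each player belongs to exactly one block, the number of non-empty blocks, and hence the number of phases contributing any moves, is at most $n$. Consequently, a per-phase bound of order $n^2 k^7/\varepsilon^4$ would suffice to give the claimed total of $\bigO(n^3 k^7/\varepsilon^4)$. The argument for phase $0$ is essentially the same as the one for phase $i\geq 1$ restricted to $q$-moves in a single block.

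To bound the moves in a fixed phase $i\geq 1$, the starting point is that every move monotonically increases the subgame potential $\Phi_{R_i}$, and by Lemma~\ref{lem:key}(2) the total increase during the phase is at most $3p^2nW_{i+1}/\varepsilon$. Each $q$-move of a player $j\in B_{i+1}$ with pre-move utility $u_j$ contributes at least $(q-1)u_j$ to this increase, while each $p$-move of $j\in B_i$ contributes at least $(p-1)u_j$. Therefore a uniform lower bound on the pre-move utility would translate the potential budget directly into a count of moves.

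Next, the plan is to split the moves in phase $i$ by a threshold $\tau$ on the pre-move utility. Moves with pre-move utility at least $\tau$ are charged against the potential, giving a count of $O\bigl(p^2 nkW_{i+1}/(\varepsilon^2 \tau)\bigr)$ such moves. For moves with pre-move utility below $\tau$, the idea is to exploit that each such move multiplies the mover's utility by a factor at least $q$, while the utility is capped above by $U_j\leq W_i=\Delta W_{i+1}$; a telescoping argument on $\log u_j$ then caps the number of small-utility moves each player can make at $O(k(\log\Delta)/\varepsilon)$, up to a correction for utility decreases induced by other players' moves in the same phase. Choosing $\tau$ as a suitable polynomial fraction of $W_{i+1}$ balances the two contributions.

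The main obstacle will be controlling the interaction between players in the same phase: a move of one player can lower another's utility, thereby seemingly enlarging the count of small-utility moves for the latter. Handling this requires charging each such decrement to corresponding increments in the potential, using Claim~\ref{cl:easy} to bound $\sum_{j\in R_i}u_j$ by $k\cdot\Phi_{R_i}$ and the per-phase bound of Lemma~\ref{lem:key}(2) to control the total variation. Assembling the two counts and summing over phases yields the bound $\bigO(n^3 k^7/\varepsilon^4)$; for {\sc nae}--$(\bar{k},k)$--{\sc Flip} games the smaller stretch gives a smaller value of $p$, which propagates through the calculation to produce the improved $\bigO(n^3 k^2/\varepsilon^4)$.
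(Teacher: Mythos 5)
Your overall skeleton is the paper's: bound the potential increase in each phase (by Lemma~\ref{lem:key}(2) for $i\geq 1$, and by the trivial bound $nW_1$ for phase $0$), divide by a per-move lower bound on the potential increase, and multiply by the at most $n$ non-empty phases. The gap is in how you obtain the per-move lower bound. You treat the pre-move utility as potentially tiny and set up a threshold-plus-telescoping scheme to count ``small-utility'' moves separately; that machinery is both unnecessary and, as proposed, unsound. It is unnecessary because of the defining property of $P_k$--{\sc Flip} games stated in the preliminaries, $U_j\leq u_j(S)+u_j(S_{-j},s'_j)$: since a move strictly improves utility, the \emph{post-move} utility satisfies $u\geq U_j/2$, and since the mover lies in $B_i$ or $B_{i+1}$ we get $u\geq W_{i+2}/2$. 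A $q$-move then raises the potential by more than $u-u/q\geq \varepsilon U_j/(7k)\geq \varepsilon W_{i+2}/(7k)$, a uniform bound that turns the phase's potential budget $3p^2nW_{i+1}/\varepsilon$ directly into at most $21p^2nk\Delta/\varepsilon^2 = O(p^5k^2n^2/\varepsilon^4)$ moves. No case split on the pre-move utility is needed.

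Moreover, the repair you sketch for the small-utility case does not go through. You propose to charge each decrease in a player's utility (caused by other players' moves) against increments of $\Phi_{R_i}$, invoking $\sum_{j\in R_i}u_j\leq k\Phi_{R_i}$. But that inequality controls the \emph{instantaneous} sum of utilities, not its total variation over the phase: when player $j'$ moves, clauses containing her variable can flip from true to false and decrease the utilities of up to $k-1$ other players by a total amount that is not bounded by the (small, positive) potential change of that single move. Without the $u\geq U_j/2$ observation, the telescoping count of a player's $q$-moves therefore has no valid lower anchor after her utility is driven down by others. I recommend replacing the entire threshold argument with the two-line observation above; the rest of your accounting (phase budget, $n$ phases, and $p=O(k)$ versus $p=O(1)$ for {\sc nae}--$(\bar{k},k)$--{\sc Flip} games) then matches the paper and yields the stated bounds.
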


\begin{proof}
Consider a moving player $j$ that belongs to block $i$ and let $u$ be her utility after she makes a move. Since this is a move in a $P_k$--{\sc Flip} game, $u\geq U_j/2$. Also, since it is at least an $\left(1+\frac{\varepsilon}{3k}\right)$-move (and since $k\geq 2$ and $\varepsilon \in (0,1]$), the potential improves by at least $u-\frac{u}{1+\frac{\varepsilon}{3k}} \geq \frac{\varepsilon U_j}{7k} \geq \frac{\varepsilon W_{i+1}}{7k}$.

We will bound the total number of moves by bounding the number of moves in each phase separately. Clearly, the increase in the potential during phase $0$ is $\Phi_{R_0}(S^0)-\Phi_{R_0}(S_{\mbox{\small in}}) \leq nW_{1}$. Hence, since only players in block $B_1$ move during phase $0$, it will end after at most $nW_{1}/\left(\frac{\varepsilon W_{2}}{7k}\right) = 7n k \Delta/\varepsilon = 1400p^3k^2n^2/\varepsilon^3$ moves. For phase $i\geq 1$, by Lemma \ref{lem:key}, we have $\Phi_{R_i}(S^{i-1})-\Phi_{R_i}(S^{i-1})\leq 3p^2nW_{i+1}/\varepsilon$. Since the moving players during this phase belong to blocks $B_i$ and $B_{i+1}$, the increase in the potential during each move is at least $\frac{\varepsilon W_{i+2}}{7k}$. Hence, the total number of moves during the phase is at most $\left(3p^2nW_{i+1}/\varepsilon\right)/\left(\frac{\varepsilon W_{i+2}}{7k}\right) = 21p^2nk\Delta/\varepsilon^2 = 4200 p^5 k^2 n^2/\varepsilon^4$.

In total, since the number of the phases that are executed by the algorithm after phase $0$ is at most $n$, the number of moves is at most $\bigO(n^3 p^5 k n^3/\varepsilon^4)$ and the lemma follows since $p\in \bigO(k)$ in general and $p=\bigO(1)$ in particular for {\sc nae}--$(\bar{k},k)$--{\sc Flip} games.
\end{proof}

\section{Open problems}\label{sec:open}
A challenging open problem is to improve the approximation guarantee of our algorithm. Our analysis indicates that a state with lower stretch at the beginning of each phase would allow us to use an even smaller value for parameter $p$ and, subsequently, to obtain a better approximation guarantee. One idea that comes immediately to mind is to replace the $q$-moves of the players of block $B_{i+1}$ within phase $i$ with the execution of an algorithm that computes states with approximately--optimal potential. For example, a random assignment to players of $B_{i+1}$ would yield a $2$-approximation to the potential of the subgame among them. Furthermore, for more structured $P_k$--{\sc Flip} games such as cut games, one might think to use the famous algorithm of \cite{GW95} that is based on semi-definite programming. Unfortunately, we do not see how to include these ideas into our algorithm at this point. The main difficulty is that the low-stretch property should hold for the subgame among the players that will move during the next phase which we do not know in advance. An algorithm that approximates the potential of all subgames simultaneously would be ideal here but, besides the local search approach implied by the $q$-moves, neither the random assignment nor the SDP-based algorithms satisfy this property.

Even if we could bypass these obstacle, our technique has limitations since computing states with low-stretch in $P_k$--{\sc Flip} games includes famous hard-to-approximate problems (e.g., see \cite{H01}). So, in order to compute almost exact equilibria, we need new techniques. Of course, we have no idea whether this is at all possible. To put the question differently, is there some inapproximability threshold for approximate equilibria? We remark that such negative statements are not known in the literature: the only known negative results are either specific to exact equilibria (such as the PLS-hardness results of \cite{FabrikantPT04,SY91}) or rule out any reasonable approximation guarantee in games with very general structure (e.g., in \cite{SkopalikV08}). We believe that such questions that are related to the computational complexity of approximate pure Nash equilibria deserve further attention.

\end{document}